\documentclass[10pt, conference]{IEEEtran}
\IEEEoverridecommandlockouts
\usepackage{cite}
\usepackage{amsmath,amssymb,amsfonts,amsthm}
\usepackage{algorithmic}
\usepackage{graphicx}
\usepackage{textcomp}
\usepackage{xcolor}
\def\BibTeX{{\rm B\kern-.05em{\sc i\kern-.025em b}\kern-.08em
    T\kern-.1667em\lower.7ex\hbox{E}\kern-.125emX}}

\begin{document}

\newtheorem{thm}{{\bf Theorem}}
\newtheorem{pro}{\bf Proposition}
\newtheorem{corollary}{\bf Corollary}
\newtheorem{lem}{{\bf Lemma}}
\newcommand{\xd}{x^{\texttt{ded}}}
\newcommand{\xs}{x^{\texttt{sh}}}

\newcommand{\yd}{y^{\texttt{ded}}}
\newcommand{\ys}{y^{\texttt{sh}}}

\newcommand{\kt}{{(k, t)}}

\newcommand{\cals}{\mathcal{S}}
\newcommand{\calu}{\mathcal{U}}

\newcommand{\rev}[1]{  
	{\textcolor{red}{#1}}{}} 
\newcommand{\bo}[1]{  
	{\textcolor{blue}{(Bo says:  #1)}}{}} 

\title{HyRA: A Hybrid Resource Allocation Framework for RAN Slicing}

\author{
\IEEEauthorblockN{Mohammad Zangooei\IEEEauthorrefmark{1}, 
Bo Sun\IEEEauthorrefmark{2}, 
Noura Limam\IEEEauthorrefmark{1}, 
Raouf Boutaba\IEEEauthorrefmark{1}}
\IEEEauthorblockA{\IEEEauthorrefmark{1}University of Waterloo, 
Email: \{mzangooei, noura.limam, rboutaba\}@uwaterloo.ca}
\IEEEauthorblockA{\IEEEauthorrefmark{2}University of Ottawa, 
Email: bo.sun@uottawa.ca}
}

\maketitle

\begin{abstract}
The advent of 5G and the emergence of 6G networks demand unprecedented flexibility and efficiency in Radio Access Network (RAN) resource management to satisfy diverse service-level agreements (SLAs). Existing RAN slicing frameworks predominantly rely on per-slice resource reservation, which ensures performance isolation but leads to inefficient utilization, particularly under bursty traffic. We introduce \textbf{HyRA}, a hybrid resource allocation framework for RAN slicing that combines dedicated per-slice allocations with shared resource pooling across slices. HyRA preserves performance isolation while improving resource efficiency by leveraging multiplexing gains in bursty traffic conditions. We formulate this design as a bi-level stochastic optimization problem, where the \textit{outer loop} determines the dedicated and shared resource budgets and the \textit{inner loop} performs per-UE scheduling under a novel water-filling approach. By using the sample-average approximation, the Karush–Kuhn-Tucker (KKT) conditions of the inner loop, and Big-M encoding, we transform the problem into a tractable mixed-integer program that standard optimization solvers can solve. Extensive simulations under diverse demand patterns, SLA configurations, and traffic burstiness show that HyRA achieves up to $\mathbf{50–75\%}$ spectrum savings compared to dedicated-only and shared-only baselines. These results highlight HyRA as a viable approach for resource-efficient, SLA-compliant RAN slicing in future mobile networks.
\end{abstract}

\begin{IEEEkeywords}
Mobile Networks, RAN Slicing, Resource Management, Optimization
\end{IEEEkeywords}

\section{Introduction}

Mobile networks have become a cornerstone of modern digital infrastructure, enabling widespread services such as conversational streaming, high-definition video streaming, and real-time gaming \cite{jeffrey14what}. Looking ahead, they are expected to support a new generation of mission-critical applications, including autonomous driving, remote surgery, industrial automation, and massive-scale Internet of Things (IoT) \cite{afolabi2018network}. Meeting these demands, especially with the advent of 5G and the promise of 6G, requires a fundamental rethinking of how networks are designed, operated, and optimized.  

Network slicing enables the dynamic creation of virtual networks tailored to tenant-specific service-level agreements (SLAs) \cite{afolabi2018network}. From an operator’s perspective, optimizing slice resource consumption is crucial for reducing operational expenditures. Yet, this optimization remains inherently challenging, particularly for Radio Access Network (RAN) slicing due to the stochastic nature of wireless transmission, the variability of user demand, and the heterogeneity of SLAs.

Most existing RAN slicing frameworks adopt a dedicated-resource approach, in which radio resources are pre-assigned to individual slices and periodically adjusted to accommodate changing network conditions~\cite{zangooei2023flexible, polese2022colo, liu2020constrained, sulaiman2023generalizable, alcaraz2022model, garces2017network, zheng2018statistical, balasingam2024application}. While this approach ensures strict SLA compliance, it neglects the potential for resource sharing among slices that arises from the inherently bursty and time-varying nature of real-world cellular data traffic. Empirical studies have shown that, even when aggregated over many users, traffic exhibits pronounced spikes and idle periods across multiple time scales~\cite{perez2023characterizing}. In such conditions, allocating resources based on peak demand results in substantial over-provisioning and disregards multiplexing opportunities across slices. The limitations of prior works can be partly attributed to their use of overly simplistic traffic models, such as constant bit rate or Poisson arrivals~\cite{zangooei2023flexible, polese2022colo, sulaiman2023generalizable, liu2020constrained, alcaraz2022model, garces2017network, zheng2018statistical, balasingam2024application}, which fail to capture the bursty nature of real-world traffic and consequently misrepresent the trade-off between resource efficiency and SLA satisfaction~\cite{perez2023characterizing}.

Two common paradigms for RAN resource allocation are \emph{dedicated resource reservation}, which offers strong isolation and guarantees but suffers from low utilization, and \emph{shared resource pooling}, which enables statistical multiplexing gains but provides limited SLA assurances. In this paper, we introduce \textbf{HyRA}, a hybrid resource allocation framework that combines the strengths of both paradigms: dedicated allocations guarantee consistent performance for each slice, while a shared resource pool flexibly accommodates bursty and time-varying traffic demands. By combining isolation with multiplexing, HyRA accounts for traffic bursts and aims to reduce overall spectrum consumption while maintaining SLA compliance.

We formulate the dynamics of this design as a bi-level stochastic optimization problem. The bi-level structure reflects hierarchical decision-making: the \emph{outer loop} determines resource budgets by deciding how many physical resource blocks (PRBs) to reserve and share. In contrast, the \emph{inner loop} captures the real-time scheduling of PRBs to User Equipments (UEs) under these budgets. The outer loop seeks the minimum PRB usage that satisfies the slice SLA constraints, while the inner loop aims to balance spectrum efficiency and fairness using a novel water-filling approach. This model mirrors control loops in state-of-the-art O-RAN architectures, where the slicing xApp sets slice policies in near real time, and the gNB scheduler executes per-UE allocations in real time~\cite{ORAN2022}. 

A key challenge in this formulation lies in the stochastic nature of traffic and channel conditions, which are unknown at the time of the outer loop decision. To handle this uncertainty, we adopt a sample-average approximation (SAA) framework that generates multiple realizations of traffic and channel states to approximate the expected service performance. The resulting bi-level optimization problem is then transformed into a single-level problem by embedding the Karush–Kuhn–Tucker (KKT) conditions of the inner loop into the outer loop. Next, we convert the fractional constraints into an equivalent set of quadratic constraints and apply Big-M encoding to linearize them. This leads to a mixed-integer program (MIP) that can be efficiently solved using general-purpose solvers such as \texttt{Gurobi} or \texttt{CPLEX}.

We evaluate HyRA's performance under diverse network conditions, including varying traffic demands, SLA definitions, and slice counts. The results consistently demonstrate that HyRA achieves substantial spectrum savings while maintaining SLA compliance across all scenarios. Under per-UE SLA enforcement, HyRA reduces PRB allocation by approximately $\mathbf{65–75\%}$ compared to the average of dedicated- and shared-only baselines. When the SLA is relaxed to slice-level aggregation, HyRA continues to outperform both baselines with $\mathbf{55–65\%}$ savings. Furthermore, scalability experiments show that, as the number of slices increases, HyRA maintains its efficiency advantage, reducing PRB consumption by up to $\mathbf{60–70\%}$ relative to the baselines' average resource consumption. Finally, we assess HyRA's robustness across varying levels of traffic burstiness. The results show that as burstiness increases, HyRA's relative efficiency further improves. Overall, the results confirm that HyRA effectively combines the isolation benefits of dedicated provisioning with the statistical multiplexing gains of shared pooling.

With this work, we demonstrate the untapped potential of jointly leveraging shared and dedicated resource pools across network slices by solving the RAN slicing problem optimally. While such an optimal solution provides valuable insights into the theoretical limits of performance, it is, by nature, computationally intractable for real-time operation. This observation aligns with the broader body of research in the field, where heuristic and learning-augmented approaches~\cite{zangooei2023flexible, polese2022colo, liu2020constrained, sulaiman2023generalizable, liu2021onslicing, alcaraz2022model, garces2017network, zheng2018statistical} have emerged to balance solution optimality with time complexity. Building upon the findings of this study, we plan to develop a practical algorithm that leverages the benefits of both sharing and dedication, while maintaining manageable computational complexity for real-world deployment.

\section{Related Work}
RAN slicing has garnered extensive attention from the research community, with numerous methodologies proposed, including heuristic-based solutions~\cite{chen2023channel, d2019slice, ksentini2017toward}, reinforcement learning~\cite{zangooei2023flexible, polese2022colo, liu2020constrained, sulaiman2023generalizable}, online learning~\cite{liu2021onslicing, alcaraz2022model}, and game-theoretic formulations~\cite{garces2017network, zheng2018statistical}. Most of these studies rely on \emph{hard slicing}, where resources allocated to each slice are strictly isolated and cannot be reused by other slices. These methods typically aim to minimize the total amount of resources dedicated across all slices while still meeting their respective service-level guarantees. However, such designs inherently restrict multiplexing opportunities among slices since idle resources in lightly loaded slices cannot be exploited by others, and thus often result in suboptimal spectrum utilization.

To mitigate these inefficiencies, recent works~\cite{zhao2025adaslicing, yang2024advancing} have introduced so-called \emph{soft slicing} schemes, in which the unused portion of a slice’s dedicated resources can be temporarily shared with other slices. While this approach improves utilization, the sharing process is typically handled in a reactive manner rather than through joint optimization. To be specific, existing soft-slicing approaches typically exhaust all resources dedicated to a slice to immediately serve its queued traffic, sharing only the remaining unused portion with other slices. However, such a service strategy overlooks the fact that certain traffic flows can tolerate short waiting times within their allowable delay budgets, rather than being served instantaneously. By neglecting this temporal flexibility, current soft-slicing designs fail to exploit the full multiplexing gains that could substantially reduce the system’s resource demand.

We argue that genuine soft slicing is achieved when only the minimum resources are dedicated per slice, while the remaining capacity is managed within a common shared pool. This design enables adaptive multiplexing across slices according to their instantaneous traffic demands and delay budgets. In this context, our work departs from prior post hoc sharing schemes by jointly optimizing dedicated and shared resource allocations within a unified framework. 

\section{System Model}
\subsection{Traffic Model} \label{sec:traffic_model}
Empirical measurements of Internet and mobile traffic demonstrated that packet flows are highly \emph{bursty} even in millisecond time scale and exhibit strong correlations across multiple time scales~\cite{park2000self, perez2023characterizing}. 
This burstiness implies that exponential or uniform distributions cannot adequately represent traffic, as such models underestimate the likelihood of extreme events such as sudden traffic surges.

To synthesize trustworthy traces that reproduce this statistical behavior, we employ the \emph{Pareto distribution} for both inter-arrival times and packet sizes. The Pareto distribution is widely recognized as a canonical model for heavy-tailed processes, where a small number of large values (long inter-arrival gaps or large packets) occur with non-negligible probability \cite{park2000self, leland2002self,  sharafzadeh2023understanding}. This property makes it well-suited to represent the long-range dependence and self-similarity observed in real-world traffic. This heavy-tailed behavior manifests in two ways:
\begin{itemize}
    \item Pareto-distributed packet sizes capture the empirical fact that a small fraction of very large packets contributes disproportionately to overall load.
    \item Pareto-distributed inter-arrival times induce traffic correlations across multiple scales. When the tail index $\alpha$ lies in the range $1 \!<\! \alpha \!<\! 2$, the aggregation of many such flows yields a self-similar process with long-range dependency~\cite{taqqu1997proof}.
\end{itemize}
In fact, the tail parameter $\alpha$ governs the severity of burstiness:
\begin{itemize}
    \item For $0 \! < \!\alpha\! \leq \! 1$, the mean of the distribution is infinite, leading to unrealistic traffic load;
    \item For $1 \! < \!\alpha\! \leq \! 2$, the mean is finite but the variance is infinite, reflecting realistic heavy-tailed burstiness;
    \item For $\alpha \!>\! 2$, both mean and variance are finite, reducing burstiness to light-tailed behavior.
\end{itemize}
In practice, selecting $\alpha$ in the range $(1,2]$ captures the statistical properties of real-world traffic: the average packet size and inter-arrival time remain bounded, while large fluctuations are sufficiently frequent.

\subsection{Channel Model}
In wireless communications, the received signal quality fluctuates due to a combination of large-scale and small-scale propagation effects. Large-scale effects include path loss, which captures the attenuation of the signal with distance from the transmitter, and shadowing, which accounts for slow variations caused by obstacles such as buildings. Small-scale effects, on the other hand, arise from multipath fading, where multiple delayed signal replicas interfere constructively and destructively at the receiver. Together, these phenomena shape the instantaneous signal-to-noise ratio (SNR) a UE experiences. This SNR directly dictates the number of bits that each PRB, which is the fundamental unit of radio resource allocation consisting of a fixed bandwidth in the frequency domain over one time slot, can reliably carry~\cite{goldsmith2005wireless}.

In practice, UEs quantize the instantaneous SNR they experience and report this quantized value, known as the Channel Quality Indicator (CQI), to the gNB. Each CQI value corresponds to a predefined modulation and coding scheme (MCS), which determines the spectral efficiency (SE) $\eta$ measured in information bits per modulation symbol. The gNB uses the reported CQI to select the MCS that maximizes throughput while maintaining a target block error rate (typically $10\%$). Once the MCS is fixed, the number of information bits per PRB is obtained by multiplying $N_{\texttt{sym}}$, the number of modulation symbols contained in a PRB (12 subcarriers × 14 OFDM symbols = 168 symbols per slot), by the corresponding SE: $N_{\texttt{bits/PRB}} = N_{\texttt{sym}} \times \eta$~\cite{3gpp38.214}.

To emulate realistic wireless channel conditions, we rely on link-level traces generated with the standardized fading channel models defined in TS 36.104 \cite{EUTRAfading}, namely Extended Pedestrian A (EPA), Extended Vehicular A (EVA), and Extended Typical Urban (ETU), using the ns-3 simulator~\cite{alcaraz2022model}. These models capture both large-scale and small-scale fading dynamics under different mobility scenarios: EPA for low-mobility pedestrian environments, EVA for medium-to-high mobility vehicular settings, and ETU for dense urban multipath conditions. The resulting traces produce stochastic variations in the instantaneous SNR experienced by a UE. From these traces, we compute the corresponding spectral efficiency $\eta$ values on a per–time slot basis, which are then used as inputs to our formulations in the following sections.

\subsection{Slicing Model}
In our system, the mobile network operator (MNO) supports multiple network slices, each tailored to a specific service type and corresponding quality of service (QoS) requirements. Each slice serves a group of UEs with similar delay budgets. These heterogeneous delay budgets across slices require the network to differentiate its service, as discussed in 3GPP TS 23.501~\cite{3gpp2024sa5gs} and studies such as \cite{polese2022colo, liu2020constrained, garces2017network}.
For instance, ultra-reliable low-latency communication (uRLLC) services, such as autonomous driving or industrial control, require extremely short delay budgets on the order of a few milliseconds, while enhanced mobile broadband (eMBB) services can tolerate more relaxed latency targets around tens of milliseconds. Also, the delay budget implicitly captures the high-throughput requirements of eMBB slices, where applications such as video streaming or augmented reality generate large traffic volumes. This large traffic volume translates into higher queueing delays when resources are limited, so ensuring that the delay remains within a given budget inherently requires allocating sufficient resources to sustain the necessary throughput.

The MNO is responsible for allocating limited PRBs to meet slices' delay requirements. The TS 28.541 resource model architecture enables operators to define, for each slice instance, whether it uses a dedicated or a shared resource pool~\cite{3gpp2021nrm}. Allocating more resources to stringent slices, such as uRLLC, ensures that latency targets are met but increases overall resource consumption. Conversely, relaxing the allocation to less-demanding slices reduces total cost or frees up capacity to serve additional UEs. Therefore, the central challenge is to balance resource efficiency and service guarantees: ensuring that each slice meets its delay budget while minimizing total resource consumption. By optimizing this trade-off, the operator can either reduce operational costs for the same user base or, under a fixed total resource budget, increase the number of UEs that can be simultaneously supported. This forms the foundation of our slicing model, which explicitly captures service heterogeneity and the impact of resource allocation decisions on delay performance across different slices.

Our RAN slicing model aligns with O-RAN architecture~\cite{ORAN2022} where an xApp deployed on a near-real-time radio intelligent controller periodically determines slicing policy and communicates it to the gNB (the policy is detailed in the next section). The gNB’s scheduler then allocates PRBs to individual UEs in real time, while ensuring that the allocation adheres to the most recent slicing policy.

\section{HyRA: A Hybrid Resource Allocation Framework for RAN Slicing} \label{sec:prob_form}
We consider a RAN slicing policy that specifies (i) the number of PRBs reserved as \emph{dedicated} to each slice, and (ii) the pool of PRBs \emph{shared} among all slices. Because future channel conditions and traffic arrivals are unknown at the time of decision, the xApp must solve a \emph{stochastic optimization problem} that balances two objectives: minimizing resource usage and satisfying slice-level SLAs.

We formulate this as a \emph{bi-level stochastic optimization problem} over a time window of length $T$~ms. The \emph{outer loop} optimization problem determines the budgets of dedicated and shared PRBs for each slice, while the \emph{inner loop} optimization problem allocates PRBs to UEs within those budgets. Let $\mathcal{S} = \{1, \dots, S\}$ denote the set of slices, and let $\mathcal{U}_s$ be the set of UEs in slice~$s$. The overall set of UEs in the system is given by $\mathcal{U} = \bigcup_{s \in \mathcal{S}} \mathcal{U}_s$.

\subsection{Outer Loop Optimization Problem} \label{sec:outer}
At the beginning of each time window, the outer loop optimization problem determines how many PRBs are reserved as \emph{dedicated} resources to each slice ($\xd_s$) and how many are pooled as \emph{shared} resources across all slices ($\xs$). 
Let $\boldsymbol{x} = \{\{\xd_s\}_{s\in\cals}, \xs \}$ denote the outer loop decision variable.
Based on fine-grained scheduling of these PRBs, each UE~$i$ in slice~$s$ experiences an average delay $d_i$ over the time window. The objective of the outer loop is to minimize the total allocated resources, 
\vspace{-4pt}
\begin{equation}
\xs + \sum_{s=1}^{S} \xd_s,    
\end{equation}
while ensuring that UE $i$’s average delay $d_i$ remains below the SLA threshold $D_{s,i}$ of its corresponding slice. 

Since future traffic and channel conditions are stochastic, the delay cannot be computed deterministically. Instead, we adopt the sample-average approximation (SAA), a standard method that reformulates stochastic optimization as a deterministic problem over finite samples \cite{kleywegt2002sample}. Specifically, we assume a fixed set of UEs and generate $K$ independent time-series samples, each of length $T$, for the packet arrivals in bits 
\begin{equation}
\{A\kt: 1 \leq k \leq K,\; 1 \leq t \leq T\},    
\end{equation}
and for the SE in bits per resource element 
\begin{equation}
\{\eta\kt: 1 \leq k \leq K,\; 1 \leq t \leq T\},    
\end{equation}
representing channel quality. At each time step $t$ of sample $k$, the lower-level problem schedules PRBs to UE~$i$ from both dedicated and shared pools, denoted by $\yd_i\kt$ and $\ys_i\kt$. 
Let $\boldsymbol{y}(k,t) = \{\yd_i\kt, \ys_i\kt\}_{i\in\calu}$ denote the inner loop decision variable for sample $k$ and time $t$.

The queue dynamics of each UE evolve according to:
\begin{subequations}
\begin{align}
& Q_i(k,\!t\!+\!1) = \max \{ Q_i\kt + A_i\kt - S_i\kt, 0\} , \\
& S_i\kt \leq N_{\texttt{sym}} \, \eta_i\kt\,\big(\yd_i\kt + \ys_i\kt\big).
\end{align}
\end{subequations}
where $S_i\kt$ is the service in bits provided to UE~$i$.

By Little’s Law, UE~$i$'s experienced delay in sample~$k$ is
\begin{equation}
\hat{d}_i(k) = \frac{\sum_{t=1}^{T} Q_i\kt}{\sum_{t=1}^{T} A_i\kt}.
\end{equation}
Intuitively, the queue length represents backlog, and Little’s Law translates backlog into average delay. Then SAA approximates the expected delay of UE~$i$ as:
\vspace{-4pt}
\begin{equation}
\hat{d}_i = \frac{1}{K} \sum_{k=1}^{K} \hat{d}_i(k).
\end{equation}
\vspace{-4pt}

Finally, the outer loop optimization problem can be expressed as:
\vspace{-4.5pt}
\begin{subequations}
\begin{align}
\min & \; \xs + \sum_{s=1}^{S} \xd_s \\
\text{s.t.}
& \frac{1}{K} \sum_{k=1}^{K} \frac{\sum_{t=1}^{T} Q_i\kt}{\sum_{t=1}^{T} A_i\kt}
\leq D_{s,i}, & \forall i \label{eq:HyRA:sla}\\
& Q_i\kt + A_i\kt - S_i\kt \leq Q_i(k,\!t\!+\!1), & \forall i, k, t \label{eq:HyRA:queue1}\\
& S_i\kt \leq N_{\texttt{sym}}\, \eta_i\kt\,\big(\yd_i\kt + \ys_i\kt\big), & \forall i, k, t \label{eq:HyRA:queue2} \\
& 0 \leq Q_i\kt, & \forall i, k, t \\
& 0 \leq \xs,\;\; 0 \leq \xd_s, & \forall s\\
& \boldsymbol{y}\kt \in \mathcal{Y}(\boldsymbol{x};k,t).
\end{align}
\end{subequations}
where $\mathcal{Y}(\boldsymbol{x};k,t)$ is the inner loop solution set for given outer loop solution $\boldsymbol{x}$, sample $k$ at time step $t$.

This outer loop formulation captures the trade-off between minimizing spectrum usage and guaranteeing SLA.

\subsection{Inner Loop Optimization Problem}

The inner loop optimization problem determines how the PRBs allocated by the outer loop are distributed among UEs at each sample $k$ and time slot $t$. The scheduler must respect the dedicated and shared budgets determined in the outer loop while ensuring efficiency and fairness among UEs.
A natural design choice would be to maximize the total throughput by assigning resources to UEs with the best instantaneous channel conditions. However, such a \emph{max-throughput scheduler} risks starving weaker UEs. Conversely, a \emph{round-robin scheduler} that ignores channel heterogeneity guarantees fairness but leads to inefficient spectrum utilization. To balance these extremes, we adopt a \emph{utility-based formulation} that maximizes the sum of the concave utilities of UE rates. In particular, the logarithmic utility captures the principle of \emph{proportional fairness}, a well-established scheduling criterion in wireless networks \cite{kelly1998rate}. This approach prioritizes UEs with favorable channels while ensuring that every UE receives a non-zero share of resources.

The inner loop problem for sample $\kt$ is defined below.
\begin{subequations}
\begin{align}
\max & \sum_{i \in \mathcal{U}} 
\log \!\left( 1 \!+\! \eta_i\kt\,\big[\yd_i\kt \!+\! \ys_i\kt\big] \right) \label{eq:inner:obj} \\
\text{s.t.}
& \sum_{i \in \mathcal{U}_s} \yd_i\kt \leq \xd_s, & \forall s \label{eq:inner:c_ded} \\
& \sum_{i \in \mathcal{U}} \ys_i\kt \leq \xs, & \label{eq:inner:c_sh} \\
& 0 \leq \yd_i\kt,\;\; 0 \leq \ys_i\kt. & \forall i \label{eq:inner:c_pos}
\end{align}
\end{subequations}

Notably, the resource units should be integer-valued, as they correspond to the count of PRBs. However, to alleviate the combinatorial complexity of the problem, we relax this constraint and treat the variables $\boldsymbol{x}$ and $\boldsymbol{y}\kt$ as continuous during optimization. After solving the problem, we apply a ceiling operation to obtain legitimate integer-valued PRB allocations, with the resulting cost typically negligible due to the fine granularity of PRBs.

\subsection{Two-stage Water-filling Scheduler}
The optimization problem in the inner loop~(\ref{eq:inner:obj}-\ref{eq:inner:c_pos}) can be intuitively understood through the lens of a two-stage \emph{water-filling} procedure. In classical water-filling~\cite{he2013water}, resources are allocated in proportion to channel quality, with the metaphor of pouring water into vessels of different base heights until they are leveled. Our formulation extends this intuition into two stages: dedicated leveling per slice, followed by shared leveling across slices.

To formally characterize the optimal inner loop allocation at instance $\kt$, we consider its Karush–Kuhn–Tucker (KKT) conditions~\cite{boyd2004convex}. Because the objective is concave and all constraints are affine, the KKT conditions are necessary and sufficient for optimality. We associate the following dual variables with instance $\kt$:
\begin{subequations}
\begin{align}
&\lambda_s\kt \geq 0, \forall s: \text{associated with constraint (\ref{eq:inner:c_ded})} \label{eq:dual_lambda} \\
&\nu\kt \geq 0: \text{associated with constraint (\ref{eq:inner:c_sh})} \label{eq:dual_nu} \\
&\gamma_i\kt \geq 0, \forall i: \text{associated with } \yd_i\kt \ge 0 \label{eq:dual_gamma} \\
&\sigma_i\kt \geq 0, \forall i: \text{associated with } \ys_i\kt \ge 0 \label{eq:dual_sigma}
\end{align}    
\end{subequations}

The KKT conditions then consist of:

\begin{itemize}
\item \textbf{Stationarity}
\begin{subequations}
\begin{align}
&\frac{\eta_i\kt}{1 + \eta_i\kt \left( \yd_i\kt + \ys_i\kt \right)} \notag \\ 
& \quad = \lambda_{s}\kt - \gamma_i\kt & \forall i\in\calu_s, s \label{eq:kkt:stat_d}\\ 
& \frac{\eta_i\kt}{1 + \eta_i\kt \left( \yd_i\kt + \ys_i\kt \right)} \notag \\
& \quad = \nu\kt - \sigma_i\kt, & \forall i\in\calu \label{eq:kkt:stat_s}
\end{align}
\end{subequations}

\item \textbf{Complementary slackness}
\begin{subequations}
\begin{align}
&\lambda_s\kt \left( \xd_s - \sum_{i \in \mathcal{U}_s} \yd_i\kt  \right) = 0 & \forall s \label{eq:kkt:cs_d} \\
&\nu\kt\left( \xs - \sum_{i \in \mathcal{U}} \ys_i\kt\right) = 0 \label{eq:kkt:cs_s} \\
&\gamma_i\kt \cdot \yd_i\kt = 0 & \forall i \label{eq:kkt:cs_id} \\
&\sigma_i\kt \cdot \ys_i\kt = 0, & \forall i \label{eq:kkt:cs_is}
\end{align}
\end{subequations}

\item \textbf{Primal feasibility}
\vspace{-8pt}
\begin{align}
(\ref{eq:inner:c_ded}), (\ref{eq:inner:c_sh}), (\ref{eq:inner:c_pos}) \notag
\end{align}

\item \textbf{Dual feasibility}
\vspace{-8pt}
\begin{align}
(\ref{eq:dual_lambda}), (\ref{eq:dual_nu}), (\ref{eq:dual_gamma}), (\ref{eq:dual_sigma}) \notag
\end{align}
\end{itemize}

In the following, we present a \emph{two-stage water-filling} method for solving the KKT conditions above.

\textbf{Stage one (Dedicated water-filling per slice).}
Each slice~$s$ receives its dedicated budget $\xd_s$, which is allocated among its UEs through a per-slice water-filling procedure. The corresponding water level $\beta_s^\ast\kt$ is defined as the solution to
\begin{equation}
    \sum_{i \in \mathcal{U}_s} \max \left\{ \frac{1}{\beta_s\kt} - \frac{1}{\eta_i\kt},\, 0 \right\} = \xd_s.
\end{equation}
This equation is strictly monotone in $\beta_s\kt$ and can be solved efficiently (e.g., via bisection). The optimal dedicated allocation then becomes
\begin{equation}
    \yd_i\kt = \max \left\{ \frac{1}{\beta_s^{\ast}\kt} - \frac{1}{\eta_i\kt},\, 0 \right\}.
\end{equation}
Conceptually, this stage allocates dedicated PRBs within each slice such that UEs with better channel conditions receive more PRBs, while weaker UEs still obtain a positive share due to the concave log-utility. This ensures proportional fairness under the slice’s dedicated budget.

\begin{figure*}[t]
    \centering
    \includegraphics[width=0.99\linewidth]{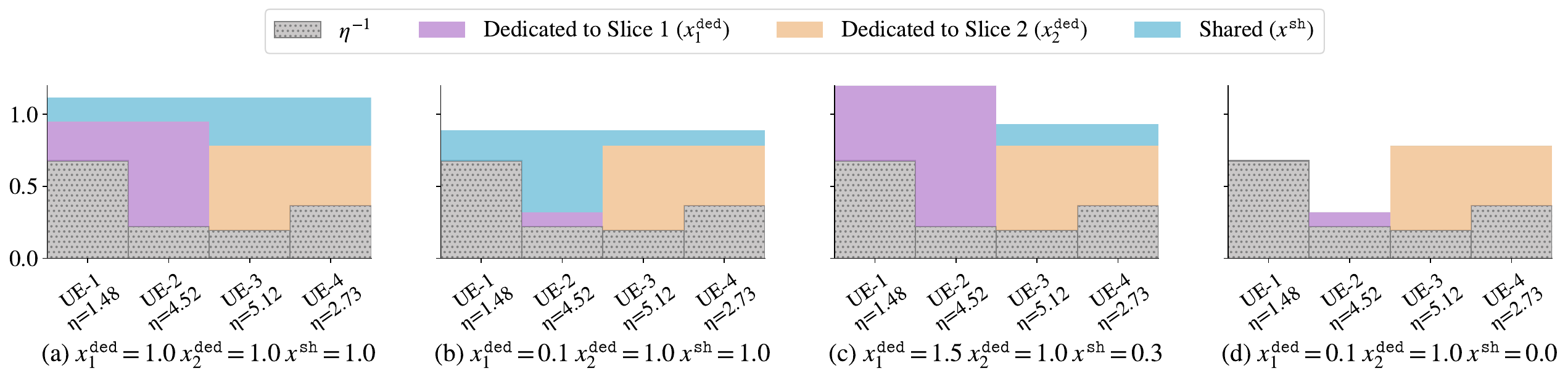}
    \caption{Water-filling interpretation of the inner loop problem. Subplots (a)–(d) show allocations under different dedicated and shared budgets for the same channel conditions across $4$ UEs (first two belonging to slice $1$, last two to slice $2$). Each stacked bar depicts the baseline inverse channel gain, followed by the allocated dedicated resources, and finally the shared resources per UE.}
    \label{fig:waterfilling}
\end{figure*}

\textbf{Stage two (Shared water-filling across slices).}
Once dedicated allocations are set, the shared budget $\xs$ is distributed across all UEs through a global water-filling step. The shared water level $\nu^\ast\kt$ solves
\begin{equation}
    \sum_{i \in \mathcal{U}} 
    \max \left\{ \frac{1}{\nu\kt} - \frac{1}{\eta_i\kt} - \yd_i\kt,\, 0 \right\} = x^{\mathrm{sh}} .
\end{equation}

The resulting shared allocation is
\begin{equation}
    \ys_i\kt = \max \left\{ \frac{1}{\nu^{\ast}\kt} - \frac{1}{\eta_i\kt} - \yd_i\kt,\, 0 \right\} .
\end{equation}

This second stage effectively levels up UE allocations across slices: UEs facing temporarily higher demand draw additional resources from the shared pool, while those under lighter load remain satisfied with their dedicated allocation.

Note that setting the dual variables as $\lambda_s^\ast\kt = \max\{\beta_s^\ast\kt, \nu^\ast\kt\}$ for all $s$ ensures that $(\lambda_s^\ast\kt, \nu^\ast\kt)$ satisfies the KKT conditions, confirming the optimality of the two-stage allocation.

Figure~\ref{fig:waterfilling} illustrates this two-stage water-filling process under different dedicated and shared allocations, highlighting the trade-off between strict isolation of dedicated resources and the statistical multiplexing of shared resources. When both slices receive balanced dedicated budgets and a sufficiently large shared pool (subplot~(a)), resources are leveled both within and across slices, resulting in efficient utilization through multiplexing. Reducing a slice’s dedicated budget while maintaining a shared pool (subplot~(b)) shows how the shared stage compensates for asymmetric slice allocations, effectively redistributing resources to meet fairness objectives. Conversely, increasing dedicated budgets while limiting the shared pool (subplot~(c)) shifts most of the leveling to within-slice allocations, reducing the opportunity for cross-slice multiplexing gains. Finally, removing the shared pool entirely (subplot~(d)) isolates slices completely, eliminating multiplexing benefits and requiring higher total provisioning to achieve the same service guarantees.

\subsection{Solving the Bi-level Optimization Problem}
We convert the bi-level problem, formulated in Section~\ref{sec:outer}, into a single-level one by embedding the inner loop KKT conditions. The inner loop problem’s convexity ensures that this reformulation remains globally optimal. However, solvers such as \texttt{Gurobi} and \texttt{CPLEX} do not support conditions of the form~(\ref{eq:kkt:stat_d})–(\ref{eq:kkt:stat_s}), where decision variables appear in the denominator. To address this, we reformulate these conditions as a set of quadratic constraints. We first observe that
\begin{equation*}
\begin{aligned}
\lambda_{s}\kt &= \frac{\eta_i\kt}{1 + \eta_i\kt \left( \yd_i\kt + \ys_i\kt \right)}  + \gamma_i\kt \\
&\ge  \frac{\eta_i\kt}{1 + \eta_i\kt \left( \yd_i\kt + \ys_i\kt \right)} > 0,
\end{aligned}
\end{equation*}

and define $\omega_{s}\kt := \frac{1}{\lambda_{s}(k,t)}$. Then we claim the set of constraints related to $\yd_i\kt$, namely (\ref{eq:kkt:stat_d}), (\ref{eq:inner:c_ded}), (\ref{eq:kkt:cs_d}), (\ref{eq:kkt:cs_id}), (\ref{eq:dual_lambda}), and (\ref{eq:dual_sigma}) can be transformed to
\begin{subequations}
\begin{align}
&\sum_{i \in \mathcal{U}_s} \yd_i\kt = \xd_s, & \forall s \\
&\eta_i\kt \left( \yd_i\kt + \ys_i\kt - \omega_s\kt\right) \notag \\& \quad \ge -1, & \forall i \\
&\yd_i\kt \big[1 + \eta_i\kt \notag \\& \quad \quad \left( \yd_i\kt + \ys_i\kt - \omega_s\kt \right)  \big] = 0, & \forall i \label{eq:transformed_ded_quad}\\
&0 < \omega_s\kt.  & \forall s
\end{align}
\end{subequations}

The proof is provided in Appendix~\ref{app:proof_prop1}.

Similarly, we claim $\nu\kt > 0$, define $\mu\kt := \frac{1}{\nu(k,t)}$, and transform the constraints related to $\ys_i\kt$, i.e., (\ref{eq:kkt:stat_s}), (\ref{eq:inner:c_sh}), (\ref{eq:kkt:cs_s}), (\ref{eq:kkt:cs_is}), (\ref{eq:dual_nu}), and (\ref{eq:dual_gamma}) to the following constraints:
\begin{subequations}
\begin{align}
&\sum_{i \in \mathcal{U}} \ys_i\kt = \xs, \\
&\eta_i\kt \left( \yd_i\kt + \ys_i\kt - \mu\kt\right) \ge -1, & \forall i \\
&\ys_i\kt \big[1 + \eta_i\kt \notag \\& \quad \quad \left( \yd_i\kt + \ys_i\kt - \mu\kt \right)  \big] = 0, & \forall i \label{eq:transformed_sh_quad}\\
&0 < \mu\kt.
\end{align}
\end{subequations}

The new sets of constraints are linear except for (\ref{eq:transformed_ded_quad}) and (\ref{eq:transformed_sh_quad}), which remain quadratic. To make the problem compatible with the solvers, we apply Big-M encoding to linearize these quadratic conditions. For instance, consider constraint~(\ref{eq:transformed_ded_quad}). Introducing a binary variable $z^{\texttt{ded}}_i\kt \in \{0,1\}$ and a sufficiently large positive constant $M$, we obtain the equivalent mixed-integer linear representation:
\begin{subequations}
\begin{align}
&\yd_i\kt \le z^{\texttt{ded}}_i\kt \; M, \\
& 1 + \eta_i\kt \left( \yd_i\kt + \ys_i\kt - \omega_s\kt \right) \notag\\& \quad \quad \le (1 - z^{\texttt{ded}}_i\kt) \; M, \\
& 1 + \eta_i\kt \left( \yd_i\kt + \ys_i\kt - \omega_s\kt \right) \notag\\& \quad \quad \ge - (1 - z^{\texttt{ded}}_i\kt) \; M.
\end{align}    
\end{subequations}

A similar encoding can be applied to constraint~(\ref{eq:transformed_sh_quad}). In this way, the original non-linear complementarity conditions are reformulated as mixed-integer linear constraints, enabling the use of state-of-the-art solvers such as \texttt{Gurobi} or \texttt{CPLEX}.


\section{Performance Evaluation} \label{sec:eval}
We evaluate the resource savings of \textbf{HyRA} under varying SLA definitions, traffic burstiness, UE counts, and slice counts. We quantify these savings relative to two baselines: \textbf{Dedicated-only}, which reserves resources for each slice, and \textbf{Shared-only}, which relies entirely on a common resource pool. Note that the \textbf{Dedicated-only} baseline represents the upper-bound performance of existing works~\cite{zangooei2023flexible, polese2022colo, liu2020constrained, alcaraz2022model, garces2017network, zheng2018statistical, balasingam2024application}, which solve the problem using learning-based methods and do not guarantee optimality. Both baselines are formulated consistently with HyRA to ensure a fair comparison. The underlying formulations are presented in Appendix~\ref{app:ded_only}, \ref{app:shared_only}, and \ref{app:HyRA}.

We set the simulation time window to $T \!=\! 20\,\text{ms}$, consistent with the policy update interval of near-real-time RAN control in O-RAN, which operates with control loop periods above $10\,\text{ms}$. The per-UE delay average is computed over $K \!=\! 20$ samples drawn from independent traffic and channel realizations (see Section~\ref{sec:prob_form}), ensuring statistical stability of short-term dynamics while maintaining tractability. Empirically, we observed that beyond $K \!=\! 20$, the variance of performance metrics becomes negligible. To further reinforce robustness, each configuration is repeated across $10$ independent random seeds, and the reported allocated PRBs correspond to the average across these $10$ seeds.

UEs are evenly distributed across the standard 3GPP's EPA, EVA, and ETU channel models to capture small, medium, and large scale mobility conditions, respectively. Packet sizes and inter-arrival times follow a Pareto distribution with parameter $\alpha\!=\! 1.5$. Empirical studies have reported alpha values within the range $1$–$2$, indicating that $\alpha \!=\! 1.5$ represents the moderate burstiness in practice~\cite{leland2002self}. We study the impact of burstiness level in Section~\ref{sec:burstiness}. Note that across all experiments in this section, the SLAs are fully satisfied; hence, we omit reporting realized SLA for brevity. 

The complete source code and experimental setup will be made publicly available upon paper acceptance. 

\begin{figure}[t]
    \centering
    \includegraphics[width=0.99\linewidth]{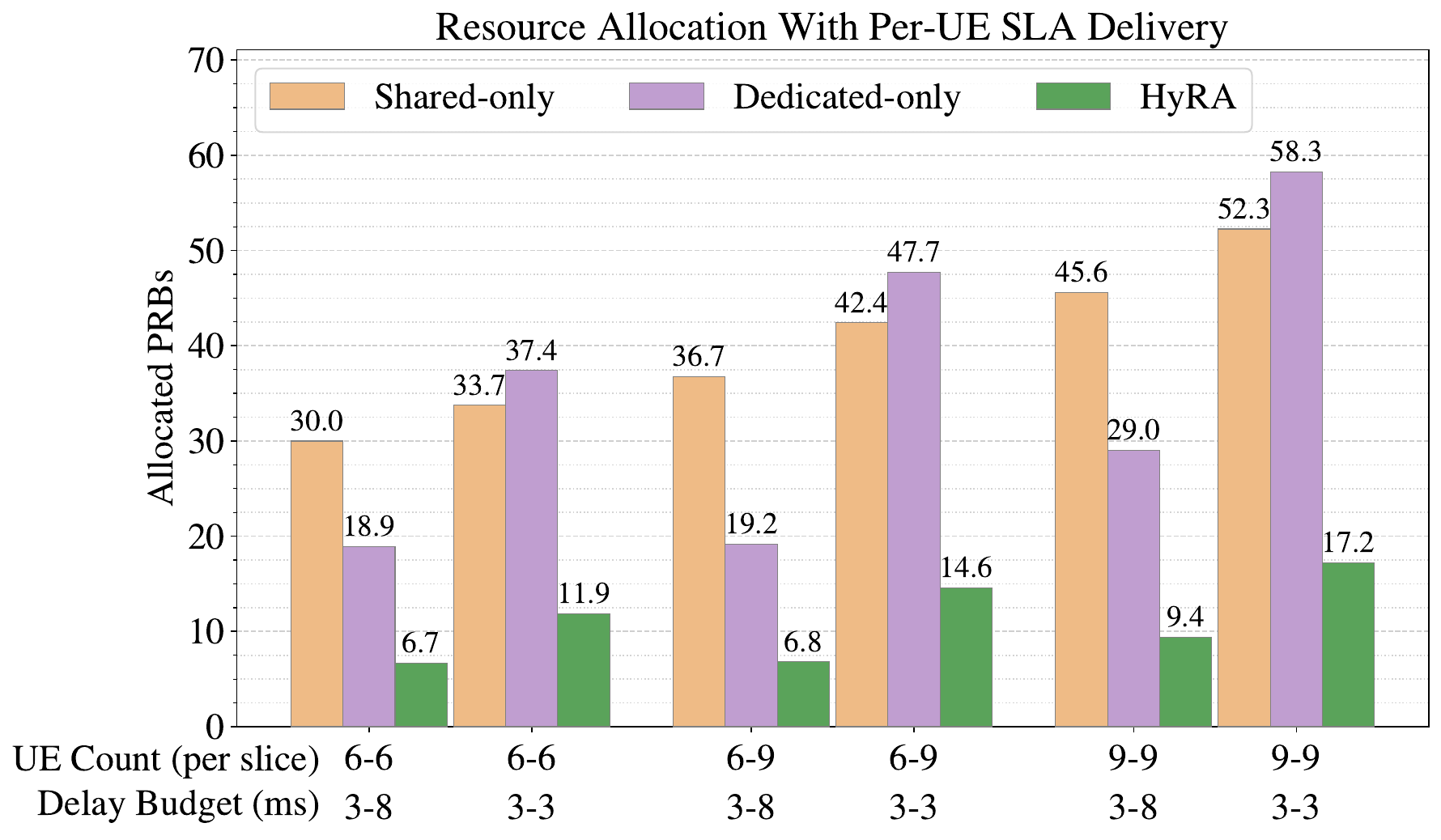}
    \caption{Resource allocation for two network slices to deliver the desired SLA \emph{per UE}. Each group of bars corresponds to a combination of UE counts and delay budgets of the two slices.}
    \label{fig:demand-ue}
\end{figure}

\subsection{Performance Under Varying Resource Demand} \label{sec:demand}
To evaluate HyRA's performance across varying resource-demand scenarios, we vary two key parameters: the number of UEs per slice and the delay budget. In particular, we evaluate cases with ${6}$ and ${9}$ UEs per slice and delay budgets of $3$ and $8$~ms, representing latency-critical and moderate-latency service classes, respectively. The results are presented in Figure~\ref{fig:demand-ue}. The general trend shows that HyRA consistently allocates fewer total resources than the baselines, demonstrating superior resource efficiency through adaptive sharing and dedication mechanisms. On average, HyRA reduces the number of allocated PRBs by approximately $65$–$75\%$ compared to the average of the two baselines. The figure shows that HyRA achieves greater relative savings when service requirements are heterogeneous and the slices have different delay budgets. This indicates its ability to adaptively exploit shared capacity while retaining minimal dedicated resources.

An additional observation concerns the trade-off between the Shared-only and Dedicated-only baselines. In scenarios with heterogeneous service requirements (e.g., delay budgets of $3$-$8$ms), the Dedicated-only approach consumes fewer PRBs, as dedicated provisioning avoids cross-slice interference in the inner loop scheduler and prioritizes latency-sensitive flows. Conversely, under homogeneous service requirements with delay budgets of $3$-$3$ms, the Shared-only scheme becomes more resource-efficient, benefiting from full multiplexing gains. This trade-off underscores the importance of adaptive hybrid strategies such as HyRA, which can dynamically interpolate between these two extremes depending on slice characteristics and demand profiles.

\subsection{Performance Under Slice-aggregated SLA Delivery} \label{sec:aggr_sla}

\begin{figure}[t]
    \centering
    \includegraphics[width=0.99\linewidth]{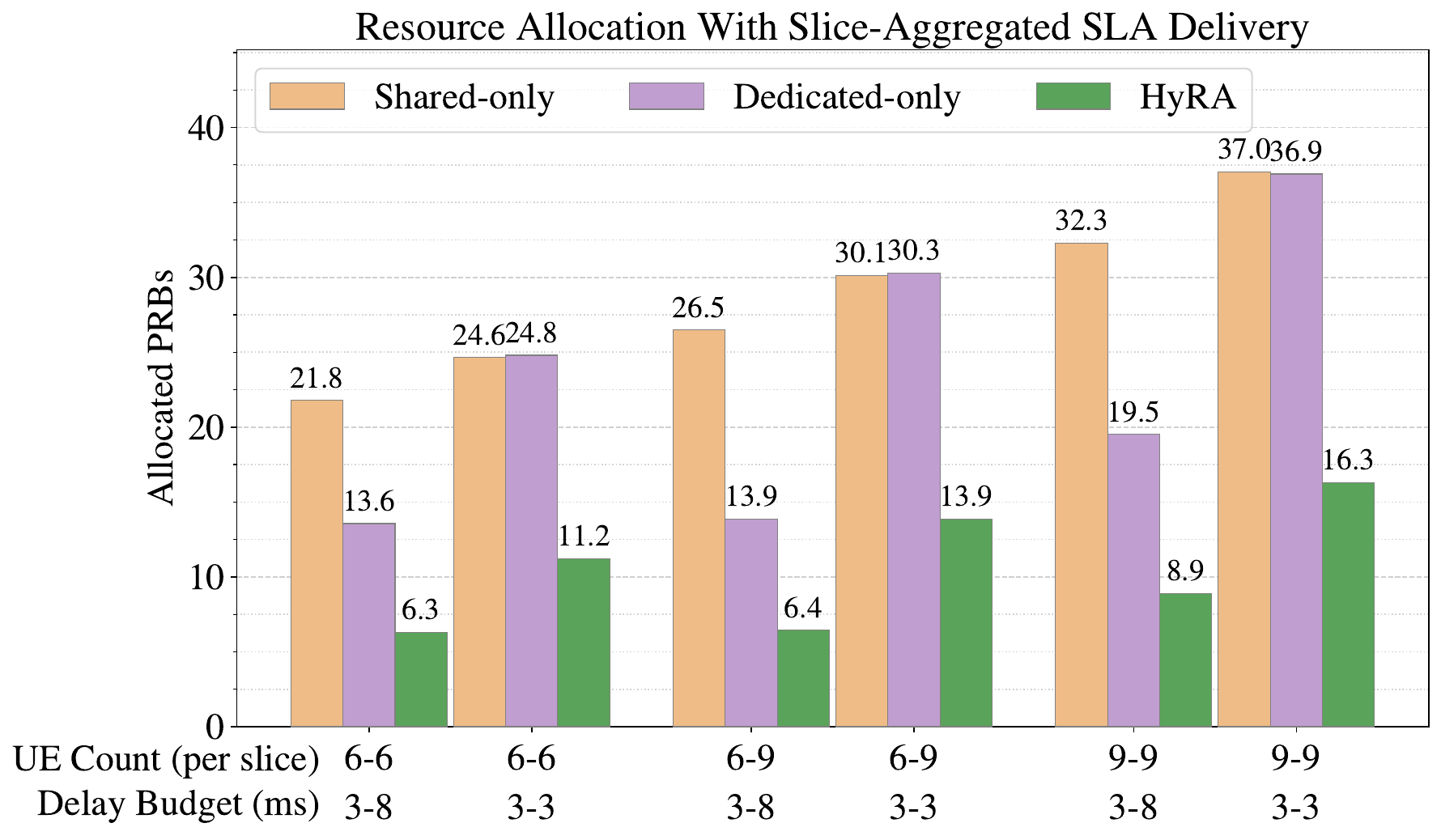}
    \caption{Resource allocation for two network slices to deliver the desired SLA \emph{aggregated over slice UEs}. Each group of bars corresponds to a combination of UE counts and delay budgets of the two slices.}
    \label{fig:demand-slice}
\end{figure}

The constraint~\eqref{eq:HyRA:sla} in the problem formulation enforces stringent \emph{per-UE} SLA satisfaction. However, one might argue that this is not a realistic expectation and define the SLA on the \emph{slice-aggregated} delay, where compliance is measured as the average delay across UEs within a slice. In this case, the constraint is 
\begin{equation}
\frac{\sum_{i \,:\, s(i) = s} \hat{d}_i}{\sum_{i \,:\, s(i) = s} 1} \leq D_{s}.
\end{equation}

To assess the impact of this alternative SLA interpretation, we re-evaluate HyRA's performance against the baselines under the same experimental conditions as in Section~\ref{sec:demand}, varying only the SLA aggregation level. 

The resulting performance comparison is reported in Figure~\ref{fig:demand-slice}. When the SLA is aggregated at the slice level, the performance trends across algorithms remain consistent with those observed under per-UE SLA enforcement, but with generally lower overall resource consumption across all schemes. This reduction stems from the relaxation of individual delay constraints; while the per-UE formulation requires each UE to meet the target delay independently, the slice-level formulation only enforces compliance at the aggregate average level, allowing greater flexibility in resource allocation. 

The results show that HyRA continues to outperform baselines under this relaxed SLA definition. The average HyRA PRB savings relative to the baselines' average remain substantial, ranging between $55\%$ and $65\%$, compared to the $65$-$75\%$ observed in the per-UE SLA definition. The smaller margin can be explained by the fact that the baselines now leverage inter-UE averaging effects within each slice to partially offset inefficiencies. Overall, these results confirm that HyRA's relative advantage persists regardless of the SLA definition.

\subsection{Performance Under Varying Slice Count} \label{sec:scale}

To evaluate \textbf{HyRA}'s performance under different levels of service diversity and scale, we vary the number of slices while keeping the number of UEs per slice at $6$. Each slice corresponds to a distinct service class characterized by a specific delay budget. The system is progressively extended from two to five slices, with delay budgets from $3$ms to $11$ms.

Figure~\ref{fig:scale} shows the average number of allocated PRBs. Across all configurations, HyRA consistently achieves the lowest resource consumption, reducing PRB usage by $60$--$70\%$ relative to the average of the baselines, with savings becoming more pronounced as the number of slices increases. This trend reflects HyRA’s capacity to exploit multiplexing gains across slices with diverse latency targets.

A comparison between the baselines highlights the benefits of hybrid operation. While the \textbf{Shared-only} scheme benefits from statistical multiplexing, its efficiency deteriorates when multiple delay budgets coexist, as contention between latency-sensitive and relaxed flows increases. Conversely, the \textbf{Dedicated-only} approach maintains isolation but fails to capitalize on temporal variations in load and channel conditions. By dynamically blending both strategies, HyRA delivers substantial savings even under five concurrent slices, demonstrating robustness in heterogeneous network scenarios.

\begin{figure}[t]
    \centering
    \includegraphics[width=0.99\linewidth]{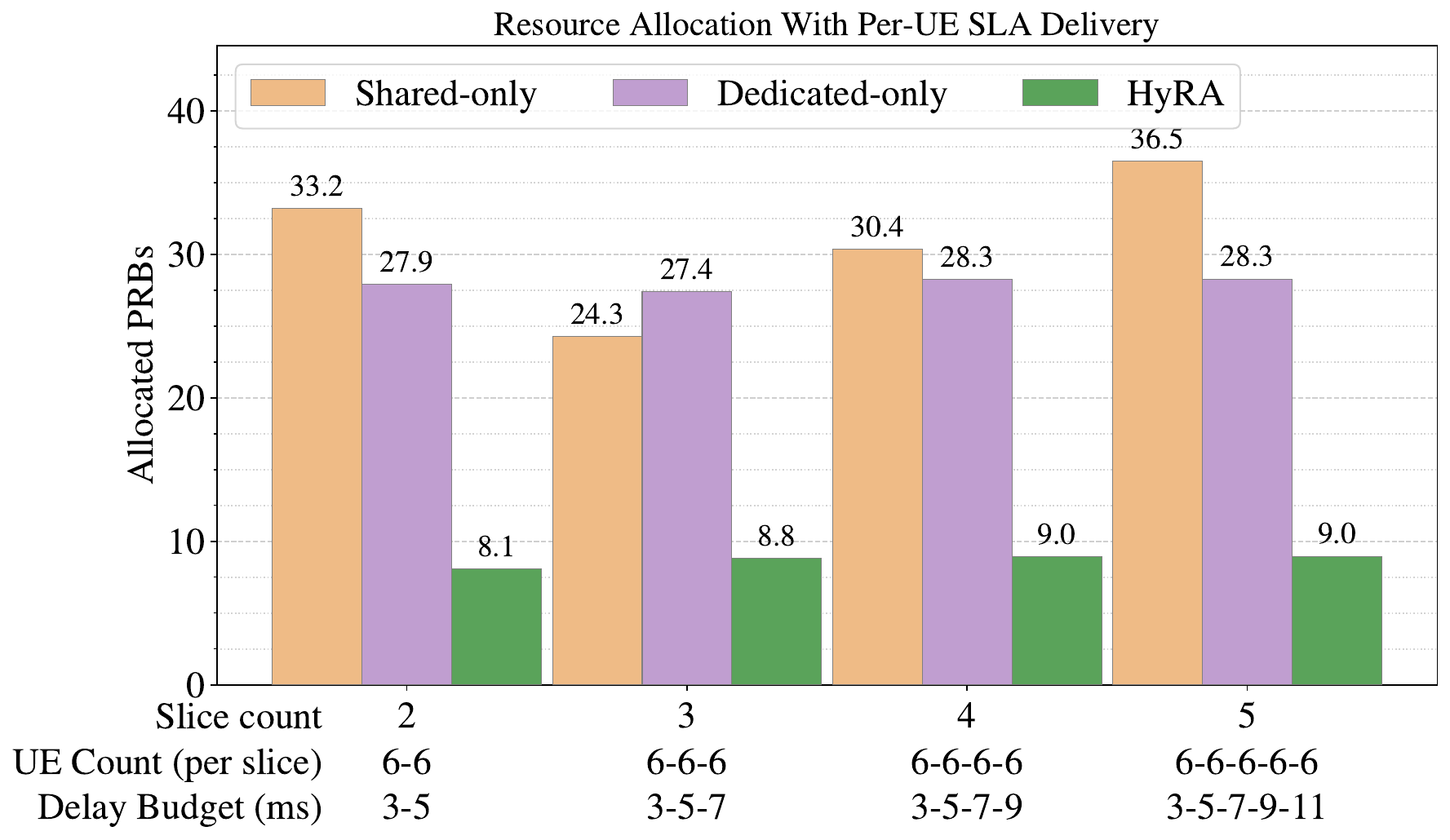}
    \caption{Impact of slice scaling on resource allocation under different delay budgets. Each group of bars corresponds to a combination of slice count and the corresponding delay budgets.}
    \label{fig:scale}
\end{figure}

\subsection{Performance Under Varying Burstiness Level} \label{sec:burstiness}

We evaluate how varying the level of traffic burstiness, defined by the Pareto tail parameter $\alpha$ introduced in Section~\ref{sec:traffic_model}, affects the performance of \textbf{HyRA}. Recall that smaller values of $\alpha$ correspond to heavier-tailed, more bursty traffic, while larger values represent smoother load profiles.

We consider two network slices, each serving six UEs, under two SLA configurations: $3$-$3$~ms and $3$-$8$~ms delay budgets. For both slices, packet sizes and inter-arrival times follow a Pareto distribution, with $\alpha$ equal to $1.95$ (moderate burstiness) and $1.05$ (high burstiness). These values span the empirically observed range for real mobile and Internet traffic~\cite{taqqu1997proof}. Figure~\ref{fig:burstiness} presents the average number of allocated PRBs across the three schemes.

The results demonstrate that \textbf{HyRA} continues to outperform baselines across all burstiness levels, and its relative advantage becomes more pronounced as $\alpha$ decreases. When $\alpha$ is reduced from $1.95$ to $1.05$, the relative PRB savings of HyRA with respect to the mean of the baselines increase from approximately $50\%$ to $62\%$ in the homogeneous-delay case $3$-$3$~ms, and from $61\%$ to $71\%$ in the heterogeneous-delay case $3$-$8$~ms. This trend indicates that HyRA’s hybrid allocation strategy better accommodates bursty traffic, in which instantaneous demand can fluctuate sharply over short timescales. Overall, these results confirm that HyRA not only provides substantial savings under stationary load but also scales resources effectively in environments with substantial temporal traffic variability.

\begin{figure}[t]
    \centering
    \includegraphics[width=0.99\linewidth]{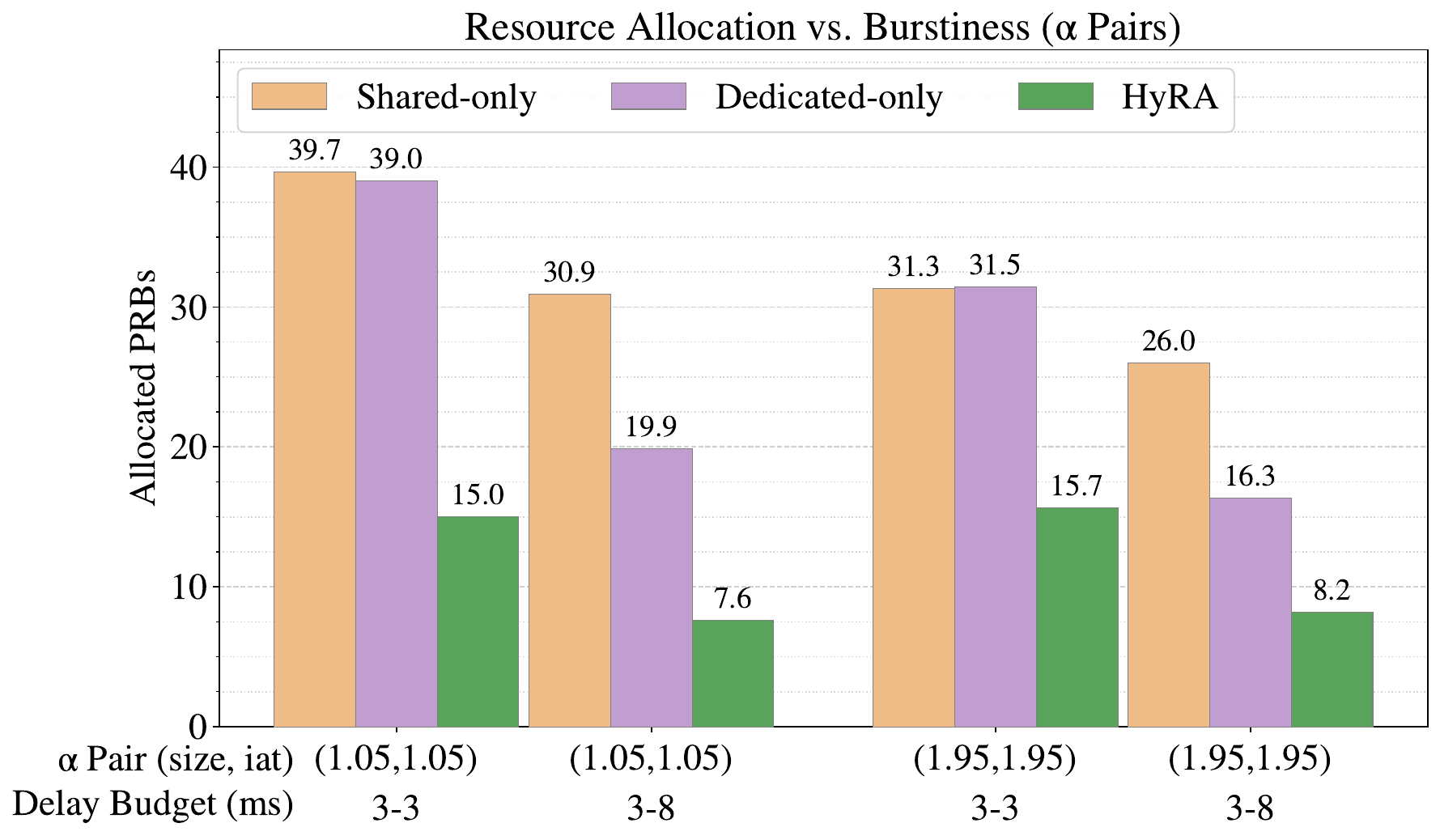}
    \caption{Impact of traffic burstiness on resource allocation under different delay budgets. Lower $\alpha$ indicates higher burstiness.}
    \label{fig:burstiness}
\end{figure}

\section{Discussion and Conclusion}
While the evaluation section demonstrates substantial performance gains with \textbf{HyRA}, it is important to acknowledge that the current optimization framework is not directly applicable to real-world deployment. The execution times for solving the problem optimally observed in our experiments range from a few seconds for the scenarios in Sections~\ref{sec:demand},~\ref{sec:burstiness}, and~\ref{sec:aggr_sla} to tens of seconds for the cases in Section~\ref{sec:scale} (detailed in Appendix~\ref{app:time}), which exceed the one-second time budget of the near-real-time control loop in O-RAN. This computational overhead arises from the exact optimization formulation, which jointly captures inter-slice coupling, stochastic traffic behavior, and SLA constraints.

Nevertheless, these results serve a crucial purpose: they reveal the untapped potential of jointly leveraging shared and dedicated resource pools across network slices by solving the RAN slicing problem optimally. The obtained solutions establish the upper bound of achievable resource efficiency and thus provide a valuable benchmark for designing practical algorithms. This insight aligns with the broader literature, where heuristic and learning-augmented methods~\cite{zangooei2023flexible, polese2022colo, liu2020constrained, liu2021onslicing, alcaraz2022model, garces2017network, zheng2018statistical} are employed to trade off optimality for computational tractability.

Building on the findings of this study, we plan to develop a real-time, learning-based framework that inherits HyRA's hybrid structure. Such a framework will aim to retain the performance advantages of shared–dedicated coordination while operating within the near-real-time constraints of O-RAN control loops.

\newpage

\section*{Acknowledgment}\label{sec:ack}
This work was supported in part by Rogers Communications Canada Inc., NSERC Alliance, and the Ontario Research Fund – Research Excellence program (Project\# ORF-RE012-051) from the Province of Ontario. The views expressed herein are those of the authors and do not necessarily reflect those of the Province.

\bibliographystyle{IEEEtran}
\bibliography{references}

\newpage
\newpage

\appendix

\subsection{Proof of Constraint Transformation} \label{app:proof_prop1}

For given $\{\ys_i\kt \ge 0\}_{\forall i\in \calu}$, define a variable $X = \{\yd_i\kt, \lambda_{s}\kt, \gamma_i\kt \}_{\forall s, \forall i}$, and define the feasible set of $X$ as $\mathcal{X}$ that is specified by constraints:
\begin{subequations}
\label{app:setA}
\begin{align}
&\frac{\eta_i\kt}{1 + \eta_i\kt \left( \yd_i\kt + \ys_i\kt \right)} \notag\\ &\quad= \lambda_{s}\kt - \gamma_i\kt, & \forall s, i \in \mathcal{U}_s \label{app:setA:1} \\
&\sum_{i \in \mathcal{U}_s} \yd_i\kt \leq \xd_s, & \forall s \label{app:setA:2} \\
&\lambda_s\kt \left( \xd_s - \sum_{i \in \mathcal{U}_s} \yd_i\kt \right) = 0, & \forall s \label{app:setA:3}\\
&\gamma_i\kt \, \yd_i\kt = 0. & \forall i \label{app:setA:4}\\
&\gamma_i\kt \ge 0 & \forall i. \label{app:setA:5}
\end{align}    
\end{subequations}

Define another variable $Y = \{\yd_i\kt, \omega_{s}\kt \}_{\forall s, \forall i}$ and its feasible set $\mathcal{Y}$ that is characterized by constraints
\begin{subequations}
\label{app:setB}
\begin{align}
&\sum_{i \in \mathcal{U}_s} \yd_i\kt = \xd_s,  \forall s \label{app:setB:1} \\
&\eta_i\kt \left( \yd_i\kt + \ys_i\kt - \omega_s\kt \right) \ge -1,  \forall i \label{app:setB:2} \\
&\yd_i\kt \Big[ 1 + \eta_i\kt \notag\\& \quad \big( \yd_i\kt + \ys_i\kt - \omega_s\kt \big) \Big] = 0,  \forall i \label{app:setB:3} \\
&\omega_s\kt > 0, \forall s. \label{app:setB:4}
\end{align}
\end{subequations}
We claim the two sets of constraints are equivalent.
\begin{pro}
Given $\{\ys_i\kt\ge 0\}_{i\in \calu}$, constraints~\eqref{app:setA} is equivalent to constraints~\eqref{app:setB}. 
\end{pro}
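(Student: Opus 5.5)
The equivalence will be shown through the change of variables $\omega_s\kt = 1/\lambda_s\kt$ and $\gamma_i\kt = \lambda_s\kt - \eta_i\kt/(1+\eta_i\kt(\yd_i\kt+\ys_i\kt))$ for $i\in\calu_s$. We prove both inclusions: every $X\in\mathcal{X}$ yields a $Y\in\mathcal{Y}$ with the same $\yd$, and conversely. Throughout, we use that $\eta_i\kt>0$ and that the primal variables satisfy $\yd_i\kt\ge 0$ and $\ys_i\kt\ge 0$. Nonnegativity of $\yd$ is part of the primal feasibility carried along with both sets, so the argument takes it as given. Consequently the fraction $f_i := \eta_i\kt/(1+\eta_i\kt(\yd_i\kt+\ys_i\kt))$ is strictly positive and finite.

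\emph{From \eqref{app:setA} to \eqref{app:setB}.} By \eqref{app:setA:1} and \eqref{app:setA:5}, we have $\lambda_s\kt = f_i + \gamma_i\kt \ge f_i > 0$, provided the slice is nonempty. Hence $\omega_s\kt := 1/\lambda_s\kt > 0$ is well defined, which gives \eqref{app:setB:4}. Since $\lambda_s\kt>0$, the complementary slackness condition \eqref{app:setA:3} forces the dedicated budget to be tight, which gives \eqref{app:setB:1}. The inequality $\lambda_s\kt \ge f_i$ rearranges, after multiplying by the positive quantity $(1+\eta_i\kt(\yd_i\kt+\ys_i\kt))\,\omega_s\kt$, to $1+\eta_i\kt(\yd_i\kt+\ys_i\kt) \ge \eta_i\kt\omega_s\kt$, which is \eqref{app:setB:2}. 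For \eqref{app:setB:3}, suppose $\yd_i\kt>0$. Then \eqref{app:setA:4} gives $\gamma_i\kt=0$, so $\lambda_s\kt = f_i$. This is equivalent to $1+\eta_i\kt(\yd_i\kt+\ys_i\kt-\omega_s\kt)=0$, and therefore the product in \eqref{app:setB:3} vanishes in every case.

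\emph{From \eqref{app:setB} to \eqref{app:setA}.} Given $Y$, set $\lambda_s\kt := 1/\omega_s\kt > 0$ and $\gamma_i\kt := \lambda_s\kt - f_i$. Then \eqref{app:setA:1} holds by construction. Condition \eqref{app:setB:2} is exactly $f_i \le 1/\omega_s\kt$, reversing the algebra above, so \eqref{app:setA:5} holds. Equality \eqref{app:setB:1} gives both \eqref{app:setA:2} and \eqref{app:setA:3}. For \eqref{app:setA:4}, suppose $\yd_i\kt>0$. Then \eqref{app:setB:3} forces $1+\eta_i\kt(\yd_i\kt+\ys_i\kt-\omega_s\kt)=0$, i.e.\ $f_i = 1/\omega_s\kt=\lambda_s\kt$, and hence $\gamma_i\kt=0$.

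The only delicate points are the sign and positivity bookkeeping. First, $\lambda_s\kt$ must be strictly positive so that the reciprocal exists and complementary slackness tightens the budget. This positivity rests on $f_i>0$, so it needs every slice to contain at least one UE and needs $\eta_i\kt>0$. Second, each inequality manipulation must multiply only by positive quantities. I expect this step to be the main point of care, especially the implicit reliance on $\yd_i\kt\ge0$ from primal feasibility. I would state both points explicitly as standing assumptions at the start of the proof.
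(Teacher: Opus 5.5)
Your proposal is correct and follows the paper's proof essentially step for step. Both use the substitution $\omega_s\kt = 1/\lambda_s\kt$ and $\gamma_i\kt = \lambda_s\kt - f_i$, obtain strict positivity of $\lambda_s\kt$ from stationarity to make the dedicated budget tight, and verify both inclusions constraint by constraint. The differences are cosmetic: the paper reads off \eqref{app:setB:2}--\eqref{app:setB:3} and \eqref{app:setA:4}--\eqref{app:setA:5} from the closed form $\gamma_i\kt = \frac{1+\eta_i\kt(\yd_i\kt+\ys_i\kt-\omega_s\kt)}{\omega_s\kt(1+\eta_i\kt(\yd_i\kt+\ys_i\kt))}$ where you argue case by case, and the standing assumptions you state (nonempty slices, $\eta_i\kt>0$, $\yd_i\kt\ge 0$) are ones the paper relies on only implicitly.
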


\textbf{Proof.}
First, if $X = \{\yd_i\kt, \lambda_{s}\kt, \gamma_i\kt \}_{\forall s, \forall i} \in\mathcal{X}$, and we define 
\begin{align*}
    \omega_s\kt = \frac{1}{\lambda_s(k,t)},
\end{align*}
then we can show that $Y = \{\yd_i\kt, \omega_{s}\kt \}_{\forall s, \forall i} \in \mathcal{Y}$.

Recall that 
\begin{align*}
 \lambda_{s}\kt &= \frac{\eta_i\kt}{1 + \eta_i\kt \left( \yd_i\kt + \ys_i\kt \right)}  + \gamma_i\kt \\
&\ge  \frac{\eta_i\kt}{1 + \eta_i\kt \left( \yd_i\kt + \ys_i\kt \right)} > 0. 
\end{align*}
Then $\omega_s\kt > 0$, and the complementary slackness constraint~\eqref{app:setA:3} gives $\sum_{i \in \mathcal{U}_s} \yd_i\kt = \xd_s$. Thus, constraints~\eqref{app:setB:4} and~\eqref{app:setB:1} are satisfied.

Substituting $\omega_s\kt = \frac{1}{\lambda_s \kt }$ into constraint~\eqref{app:setA:1} gives
\begin{align*}
    \gamma_i\kt = 
    &\frac{1 + \eta_i\kt(\yd_i\kt + \ys_i\kt -  \omega_s\kt)}
     {\omega_s\kt(1 + \eta_i\kt(\yd_i\kt + \ys_i\kt))}.
\end{align*}
Since $\gamma_i\kt \ge 0$, and $\gamma_i\kt \, \yd_i\kt = 0$, we can have constraints~\eqref{app:setB:2} and~\eqref{app:setB:3}. Thus, $Y$ is a feasible solution to $\mathcal{Y}$.

Next, suppose $Y = \{\yd_i\kt, \omega_{s}\kt \}_{\forall s, \forall i} \in \mathcal{Y}$, and define 
\begin{align*}
    \lambda_s \kt &= \frac{1}{\omega_{s}\kt},\\
    \gamma_i\kt
&= \lambda_s\kt
- \frac{\eta_i\kt}{1 + \eta_i\kt(\yd_i\kt + \ys_i\kt)}.
\end{align*}
We prove that $X = \{\yd_i\kt, \lambda_{s}\kt, \gamma_i\kt \}_{\forall s, \forall i} \in \mathcal{X}$.

It is straightforward to verify that $X$ satisfy constraints~\eqref{app:setA:1}-\eqref{app:setA:3}. Then substituting $\lambda_s \kt = \frac{1}{\omega_{s}\kt}$ into $\gamma_i\kt$ gives
\begin{align*}
    \gamma_i\kt = 
    &\frac{1 + \eta_i\kt(\yd_i\kt + \ys_i\kt -  \omega_s\kt)}
     {\omega_s\kt(1 + \eta_i\kt(\yd_i\kt + \ys_i\kt))}.
\end{align*}
Then constraints~\eqref{app:setB:2} and~\eqref{app:setB:3} directly indicate that constraints~\eqref{app:setA:5} and~\eqref{app:setA:4}, respectively.

Combining the above two cases completes the proof. $\qed$

\newpage
\subsection{Runtime Performance} \label{app:time}
For all optimization experiments, we used Gurobi running on an Intel(R) Xeon(R) Silver 4314 CPU at 2.40 GHz. Figure~\ref{fig:time} reports the solver runtime for the evaluation scenario with two slices, a horizon of $T=20$, and $K=20$ samples, consistent with the setup described earlier in the evaluation section. Across these configurations, the runtime ranges from approximately $5$ to $30$ seconds. As expected, HyRA consistently exhibits the largest runtime, since its problem formulation includes substantially more variables than the Shared-only and Dedicated-only baselines—capturing both dedicated resource allocations and shared resource interactions. 

As mentioned earlier, shorter runtime is not the primary focus of this work. Our goal is to reveal the untapped potential of jointly leveraging shared and dedicated resource pools by solving the RAN slicing problem optimally. While such an optimal solution provides important insights into the theoretical performance limits, it is inherently too computationally expensive for real-time network operation.

\begin{figure}[h]
    \centering
    \includegraphics[width=0.99\linewidth]{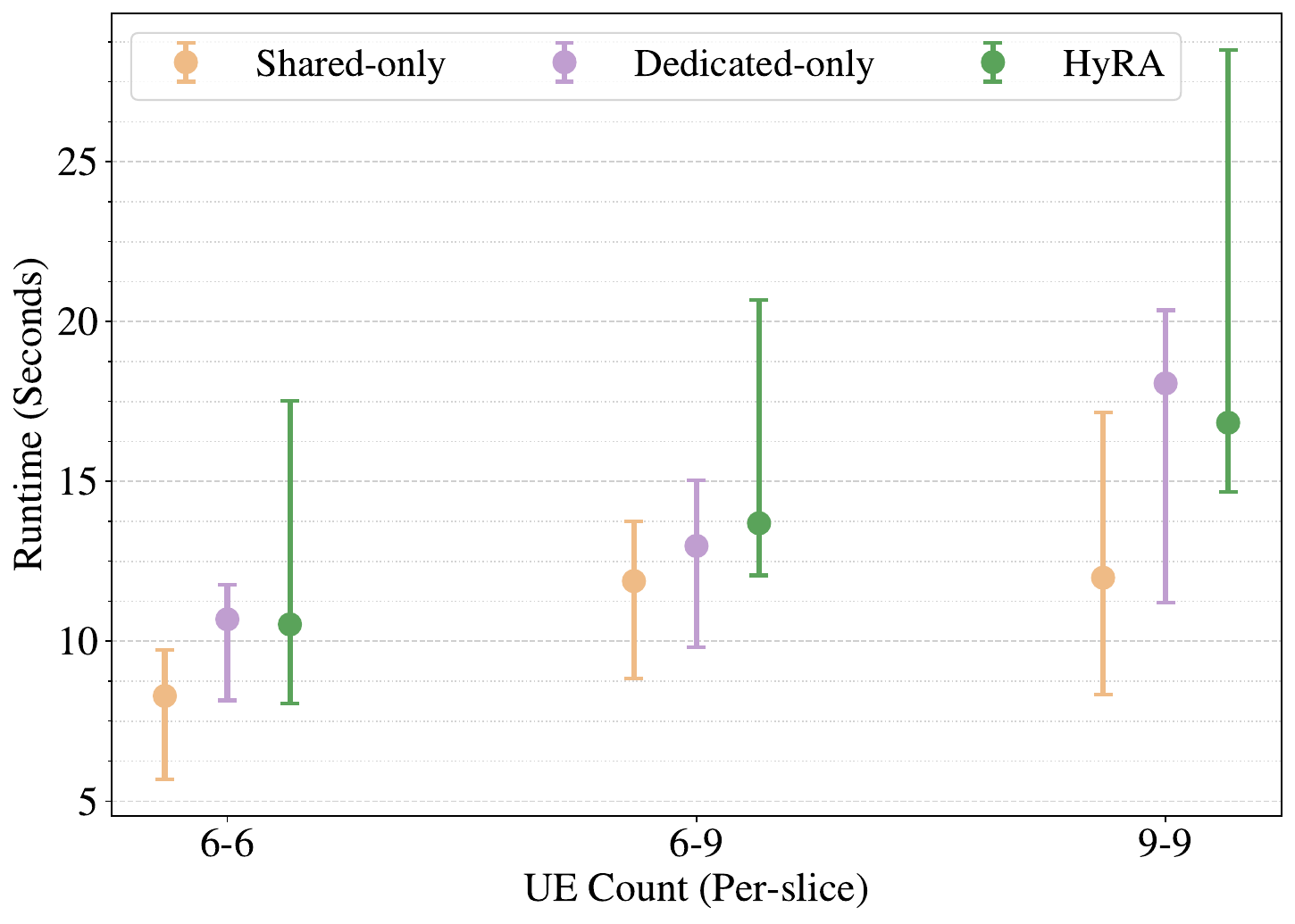}
    \caption{Runtime performance of the three resource allocation algorithms across different per-slice UE configurations. Each point shows the mode runtime over all experiment runs, and the error bars denote the interquartile range (25th–75th percentiles).}
    \label{fig:time}
\end{figure}

\newpage

\subsection{Dedicated-only} \label{app:ded_only}
For comparison, we provide the complete formulation of the \textbf{Dedicated-Only} baseline, in which all resources are pre-assigned to individual slices, with no shared pool. This model represents the traditional static slicing approach and serves as a benchmark for evaluating the performance of HyRA.

\begin{subequations}
\begin{align}
\min & \;\sum_s \xd_s \\&
S_i\kt \leq N_{\texttt{sym}} \eta_i\kt \yd_i\kt, & \forall i, k, t\\&
Q_i\kt + A_i\kt - S_i\kt \leq Q_i(k, t+1), & \forall i, k, t \\&
\frac{1}{K} \sum_{k=1}^{K} 
\frac{
\sum_{t=1}^{T} Q_i\kt
}{
\sum_{t=1}^{T} A_i\kt
}
\leq D_{s,i}, & \forall i \\&
\sum_{i \in \mathcal{U}_s} \yd_i\kt = \xd_s, & \forall s, k, t \\& 
\eta_i\kt (\yd_i\kt - \omega_{s(i)}\kt) \geq -1, & \forall i, k, t \\&
z^{\texttt{ded}}_i\kt \in\{0,1\}, & \forall i, k, t \\& 
\yd_i\kt \le M z^{\texttt{ded}}_i\kt, & \forall i, k, t \\&
1 + \eta_{i}\kt \left(\yd_i\kt - \omega_{s(i)}\kt\right) \notag\\&\quad \le M (1-z^{\texttt{ded}}_i\kt), & \forall i, k, t \\&
- 1 - \eta_{i}\kt\left(\yd_i\kt - \omega_{s(i)}\kt\right) \notag\\&\quad \le M (1-z^{\texttt{ded}}_i\kt), & \forall i, k, t \\&
-1 - \eta_{i}\kt\left(\yd_i\kt + \ys_i\kt - \mu\kt\right) \notag\\&\quad \le M (1-z^{\texttt{ded}}_i\kt), & \forall i, k, t \\&
0 \leq \yd_i\kt, & \forall i, k, t \\&
0 < \omega_s\kt. & \forall s, k, t
\end{align}    
\end{subequations}

\newpage
\subsection{Shared-only} \label{app:shared_only}
This subsection presents the complete formulation of the \textbf{Shared-Only} baseline, where all slices draw their resources from a common pool without any dedicated allocation. This setup highlights the behavior of fully multiplexed resource sharing in contrast to the hybrid HyRA design.

\begin{subequations}
\begin{align}
\min & \; \xs \\&
S_i\kt \leq N_{\texttt{sym}} \eta_i\kt \ys_i\kt, & \forall i, k, t\\&
Q_i\kt + A_i\kt - S_i\kt \leq Q_i(k, t+1), & \forall i, k, t \\&
\frac{1}{K} \sum_{k=1}^{K} 
\frac{
\sum_{t=1}^{T} Q_i\kt
}{
\sum_{t=1}^{T} A_i\kt
}
\leq D_{s,i}, & \forall i \\&
\sum_{i \in \mathcal{U}} \ys_i\kt = \xs, & \forall k, t \\&
\eta_i \kt (\ys_i\kt - \mu\kt) \geq -1 , &  \forall i, k, t \\&
z^{\texttt{sh}}_i\kt \in\{0,1\}, & \forall i, k, t \\&
\ys_i\kt \le M z^{\texttt{sh}}_i\kt, & \forall i, k, t \\&
1 + \eta_{i}\kt\left(\ys_i\kt - \mu\kt\right) \notag\\&\quad \le M (1-z^{\texttt{sh}}_i\kt), & \forall i, k, t \\&
-1 - \eta_{i}\kt\left(\ys_i\kt - \mu\kt\right) \notag\\&\quad \le M (1-z^{\texttt{sh}}_i\kt), & \forall i, k, t \\&
0 \leq \ys_i\kt, & \forall i, k, t \\&
0 < \mu\kt. & \forall k, t
\end{align}    
\end{subequations}

\newpage
\subsection{HyRA} \label{app:HyRA}
This subsection provides the complete MIP formulation of the proposed \textbf{HyRA} framework. 

\begin{subequations}
\begin{align}
\min & \; \xs + \sum_s \xd_s \\&
S_i\kt \leq N_{\texttt{sym}} \eta_i\kt \left( \yd_i\kt + \ys_i\kt \right), & \forall i, k, t\\&
Q_i\kt + A_i\kt - S_i\kt \leq Q_i(k, t+1), & \forall i, k, t \\&
\frac{1}{K} \sum_{k=1}^{K} 
\frac{
\sum_{t=1}^{T} Q_i\kt
}{
\sum_{t=1}^{T} A_i\kt
}
\leq D_{s,i}, & \forall i \\&
\sum_{i \in \mathcal{U}_s} \yd_i\kt = \xd_s, & \forall s, k, t \\& 
\sum_{i \in \mathcal{U}} \ys_i\kt = \xs, & \forall k, t \\&
\eta_i\kt (\yd_i\kt + \ys_i\kt - \omega_{s(i)}\kt) \geq -1, & \forall i, k, t \\&
\eta_i \kt (\yd_i\kt + \ys_i\kt - \mu\kt) \geq -1 , &  \forall i, k, t \\&
z^{\texttt{ded}}_i\kt \in\{0,1\}, & \forall i, k, t \\& 
\yd_i\kt \le M z^{\texttt{ded}}_i\kt, & \forall i, k, t \\&
1 + \eta_{i}\kt \left(\yd_i\kt+\ys_i\kt - \omega_{s(i)}\kt\right) \notag\\&\quad \le M (1-z^{\texttt{ded}}_i\kt), & \forall i, k, t \\&
- 1 - \eta_{i}\kt\left(\yd_i\kt+\ys_i\kt - \omega_{s(i)}\kt\right) \notag\\&\quad \le M (1-z^{\texttt{ded}}_i\kt), & \forall i, k, t \\&
z^{\texttt{sh}}_i\kt \in\{0,1\}, & \forall i, k, t \\&
\ys_i\kt \le M z^{\texttt{sh}}_i\kt, & \forall i, k, t \\&
1 + \eta_{i}\kt\left(\yd_i\kt + \ys_i\kt - \mu\kt\right) \notag\\&\quad \le M (1-z^{\texttt{sh}}_i\kt), & \forall i, k, t \\&
-1 - \eta_{i}\kt\left(\yd_i\kt + \ys_i\kt - \mu\kt\right) \notag\\&\quad \le M (1-z^{\texttt{sh}}_i\kt), & \forall i, k, t \\&
0 \leq \yd_i\kt, \; \ys_i\kt, & \forall i, k, t \\&
0 < \omega_s\kt, \; \mu\kt. & \forall s, k, t
\end{align}    
\end{subequations}

\end{document}